\definecolor{darkred}{RGB}{127,0,0}
\definecolor{darkgreen}{RGB}{0,127,0}
\definecolor{darkblue}{RGB}{0,0,127}
\newcommand{\RE}{\mathbb{R}}
\newcommand{\REp}{\mathbb{R}^d_{+}}
\newcommand{\REpp}{\mathbb{R}^d_{++}}
\newcommand{\OBJ}{{\mathcal{O}}}
\newcommand{\WPP}{\succsim}
\newcommand{\PP}{\succ}
\newcommand{\EFF}{\text{EFF}}
\newcommand{\WST}{\text{WST}}
\newcommand{\PE}{\text{PN}}
\newcommand{\CA}{\mathcal{A}}
\newcommand{\CE}{\mathcal{E}}
\newcommand{\CF}{\mathcal{F}}
\newcommand{\CC}{\mathcal{C}}
\begin{document}
\mainmatter              
\title{Efficiency in Multi-objective Games}
\titlerunning{Efficiency in Multi-objective Games}  
%
\author{Anisse Ismaili$^{1,2}$}
%
\authorrunning{Anisse Ismaili} 
%
\institute{$^1$ Universit\'{e} Pierre et Marie Curie,
Univ Paris 06, UMR 7606, LIP6, F-75005, Paris, France\\
$^2$ Paris Dauphine University, Place du Mal de Lattre de Tassigny, 
75775 Paris Cedex 16, France\\
\email{anisse.ismaili@lip6.fr}}

\maketitle              

\vspace*{-0.5cm}
\begin{abstract}
In a multi-objective game, each agent individually evaluates each overall action-profile on multiple objectives. I generalize the price of anarchy to multi-objective games and provide a polynomial-time algorithm to assess it$^1$.\\
This work asserts that policies on tobacco promote a higher economic efficiency.
\end{abstract}
\vspace*{-0.8cm}


\section{Introduction}

Economic agents, for each individual decision, make a trade off between multiple objectives, like for instance: time, resources, goods, financial income, sustainability, happiness and life. This motivated the introduction of a super-class of games: multi-objective (MO) games \cite{blackwell1956analog,shapley1959equilibrium}. Each agent evaluates each overall action profile by a \emph{vector}. His individual preference is a \emph{partial} rationality modelled by the Pareto-dominance. It induces Pareto-Nash-equilibria (PN) as the overall selfish outcomes. 
Furthermore, concerning economic models, such vectorial evaluations are a humble backtrack from the intrinsic and subjective theories of value, towards a non-theory of value where the evaluations are maintained vectorial, in order to enable partial rationalities and to avoid losses of information in the model. 
In this more realistic (behaviourally less assumptive) framework, in order to avoid critical losses of information on the several objectives in the model, thoroughly computing  efficiency  is a tremendous necessity \cite{madeley1999big,sloan2004price}. 

The literature on MO games is disparate and will be presented where relevant.
After the preliminaries below, Section \ref{sec:mopoa} generalizes the \textit{coordination ratio} (CR, better known as ``price of anarchy'') to MO games. Section \ref{sec:application} applies it to the efficiency of tobacco economy. Section \ref{sec:computation} provides algorithms\footnote{Appendix \ref{app:smooth} shows that ``smoothness'' analysis \cite{roughgarden2009intrinsic} cannot be applied to MO games.} to assess the MO-CR.  

Let $N=\{1,\ldots,n\}$ denote the \textit{set of agents}.
Let $A^i$ denote each agent $i$'s \textit{action-set} (discrete, finite).
Each agent $i$ decides an \textit{action} $a^i\in A^i$.
Given a subset of agents $M\subseteq N$, let $A^M$ denote $\times_{i\in M} A^i$ and let $A=A^N$ denote the \textit{set of overall action-profiles}.
Let $\OBJ=\{1,,\ldots,d\}$ denote the \textit{set of all the objectives}, with $d$ fixed.
Let $v^i: A\rightarrow \REp$ denote an agent $i$'s \textit{individual MO evaluation function}, which maps each overall action-profile $a=(a^1,\ldots,a^n)\in A$ to an MO evaluation $v^i(a)\in\REp$. Hence, agent $i$'s evaluation for objective $k$ is $v^i_k(a)\in\RE_+$. Given an overall action-profile $a\in A$, $a^M$ is the restriction of $a$ to $A^M$, and $a^{-i}$ to $A^{N\setminus\{i\}}$. 

\begin{definition}\quad A Multi-objective Game (MOG) is a tuple $\left(N, \{A^i\}_{i\in N}, \OBJ, \{v^i\}_{i\in N}\right)$.
\end{definition}
For instance, MO games encompass single-objective (discrete) optimization problems, MO optimization problems and non-cooperative games.
Assuming $\alpha=|A^i|\in\mathbb{N}$ for each agent, the representation of an MOG requires $n\alpha^n$ $d$-dimensional vectors. 

Let us now supply the vectors with a preference relation. Assuming a \textit{maximization} setting, given $x,y\in\REp$, the following relations state respectively that $y$ (\ref{eq:wpp}) weakly-Pareto-dominates and (\ref{eq:pp}) Pareto-dominates $x$:
\begin{eqnarray}
y\WPP x &\hspace{0.5cm}\Leftrightarrow\hspace{0.5cm}& \forall k\in\OBJ,~~ y_k\geq x_k
\label{eq:wpp}\\
y\PP x &\Leftrightarrow & \forall k\in\OBJ,~~ y_k\geq x_k\text{~~and~~}\exists k\in\OBJ,~~ y_k> x_k
\label{eq:pp}
\end{eqnarray}
The Pareto-dominance is a \emph{partial} order, inducing a multiplicity of Pareto-efficient outcomes. Formally, the set of efficient vectors is defined as follows:
\begin{definition}[Pareto-efficiency] 
For $Y\subseteq\REp$, the efficient vectors $\EFF[Y]\subseteq Y$ are:
$$\EFF[Y]=\{y^\ast\in Y~~|~~ \forall y\in Y, \mbox{~not~} (y\PP y^\ast)\}$$
\end{definition}
(Similarly, let $\WST[Y]=\{y^{-}\in Y\mbox{~s.t.~} \forall y\in Y, \mbox{~not~} (y^{-}\PP y)\}$ denote the subset of worst vectors.)
Pareto-efficiency enables to define as efficient all the trade-offs that cannot be improved on one objective without being downgraded on another one, that is: the best compromises between objectives (see e.g. Figure \ref{fig:eff}). 

At the individual scale, Pareto-efficiency defines a \textit{partial rationality}, enabling to model behaviours that single-objective (SO) games would not model consistently.   
\begin{definition}[Pareto-Nash equilibrium \cite{shapley1959equilibrium}]\label{def:PE}
In an MOG, an action-profile $a\in A$ is a Pareto-Nash equilibrium (denoted by $a\in\PE$), if and only if, for each agent $i\in N$:
$$v^i(a^i,a^{-i})\quad\in\quad\EFF\left[\quad v^i(A^i,a^{-i})\quad\right]$$
where $v^i(A^i,a^{-i})$ denotes $\{v^i(b^i,a^{-i})\mid b^i\in A^i\}$.
\end{definition}
Pareto-Nash equilibria encompass most behaviourally possible action-profiles. For instance, whatever an agent's subjective linear positive weighted combination of the objectives, his decision is Pareto-efficient. One can distinguish behavioural objectives inducing $\PE$ and also objectives on which to focus an efficiency study.

{\it Equilibrium existence.}
In many sound probabilistic settings \cite{daskalakis2011connectivity,dresher1970probability,rinott2000number}, Pareto efficiency is not demanding on the conditions of individual rationality, hence there are multiple Pareto-efficient responses. Consequently, pure PN are numerous in average: $|\PE|\in\Theta(\alpha^{\frac{d-1}{d}n})$, justifying their existence in a probabilistic manner. Furthermore, in MO games with MO potentials \cite{monderer1996potential,patrone2007multicriteria,rosenthal1973class}, the existence is guaranteed. 

\begin{example}[A didactic toy-example in Ocean Shores]
Five shops in Ocean Shores (the nodes) can decide upon two activities: renting bikes or buggies, selling clams or fruits, etc. Each agent evaluates his local action-profile depending on the actions of his inner-neighbours and according to two objectives: financial revenue and sustainability.\\
\hspace*{-31cm}\includegraphics[scale=0.8]{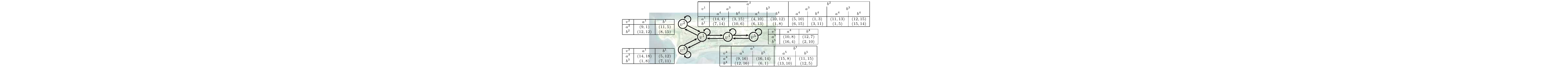}\\
For instance, we have $(b^1,b^2,a^3,b^4,b^5)\in\PE$, since each of these individual actions, given the adversary local action profile (column), is Pareto-efficient among the two actions of the agent (row). Even if the relative values of the objectives cannot be certainly ascertained, all the subjectively efficient vectors are encompassed by the individual Pareto-efficiency. In this MO game, there are $13$ Pareto-Nash-equilibria, which utilitarian evaluations are depicted in Figure \ref{fig:eff} (Section \ref{sec:mopoa}).
\label{ex:Ocean}
\end{example}

\section{The Multi-objective Coordination Ratio}
\label{sec:mopoa}

It is well known in game theory that an equilibrium  can be overall inefficient with regard to the sum of the individual evaluations. This loss of efficiency is measured by the {\it coordination ratio}\footnote{As Smoothness \cite{roughgarden2009intrinsic} cannot be applied to MO games, I cannot use the term \textit{Price of Anarchy}.} (CR) \cite{aland2006exact,awerbuch2005price,christodoulou2005price,guo2005price,koutsoupias1999worst,roughgarden2009intrinsic,roughgarden2007introduction}   $\min[u(PE)]/\max[u(A)]$. Regrettably, when focusing on one sole objective (e.g. making money or a higher GDP), there are losses of efficiency that are not measured (e.g. non-sustainability of productions or production of addictive carcinogens). This appeals for a more thorough analysis of the loss of efficiency at equilibrium and the definition of a {\it multi-objective} coordination ratio.
%
%

The utilitarian social welfare $u:A\rightarrow\REp$ is a vector-valued function  measuring social welfare with respect to the $d$ objectives: $u(a)=\sum_{i\in N} v^i(a)$, excluding the purely behavioural objectives that cause irrationality \cite{sloan2004price}.
Given a function $f:A\rightarrow Z$, the \textit{image set} $f(E)$ of a subset $E\subseteq A$ is defined by $f(E)=\{f(a)|a\in E\}\subseteq Z$. Given $\rho,y,z\in\REp$, the vector $\rho\star y\in\REp$ is defined by $\forall k\in\OBJ, (\rho\star y)_k=\rho_k y_k$ and the vector $y / z\in\REp$ is defined by $\forall k\in\OBJ, (y / z)_k=y_k / z_k$. 
For $x\in\REp$, $x\star Y$ denotes $\{x\star y\in\REp ~~|~~ y\in Y\}$. 
Given $x\in\REp$, $\CC(x)$ denotes $\{y\in\REp~|~x\WPP y\}$.
I also introduce\footnote{To enable ratios, one can do the minor assumption $\CF\subseteq\REpp$.} the notations $\CE$ and $\CF$, illustrated in Figures \ref{fig:eff} and \ref{fig:mopoa}:

\noindent
\begin{minipage}{0.58\columnwidth}
\noindent 
\begin{itemize}
\item \textcolor{darkblue}{$\CA=u(A)$} the set of \textit{outcomes}. $\textcolor{darkblue}{(\bullet)}$ 
\item \textcolor{darkgreen}{$\CE=u(\PE)$} the \textit{equilibria outcomes}. $\textcolor{darkgreen}{(\blacklozenge)}$
\item \textcolor{darkred}{$\CF=\EFF[u(A)]$} the \textit{efficient outcomes}. $\textcolor{darkred}{(\times)}$
\end{itemize}
For SO games, the worst-case efficiency of equilibria is measured by the CR $\min[u(PE)]/\max[u(A)]$.  However, for MO games, there are many equilibria and optima, and a ratio of the (green) \textit{set} \textcolor{darkgreen}{$\CE$} over the (red) \textit{set} \textcolor{darkred}{$\CF$} is not defined yet and ought to maintain the information on each objective without introducing dictatorial choices.
\end{minipage}
~~~
\begin{minipage}{0.38\columnwidth}
\centering 
%
\includegraphics[scale=0.36]{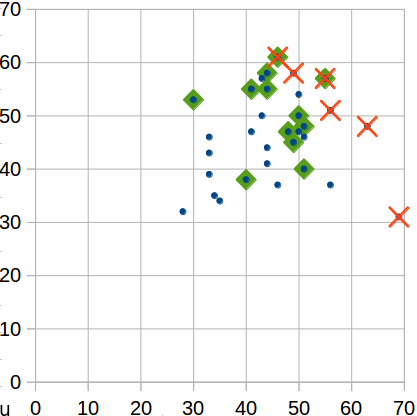}
\captionof{figure}{The bi-objective utilitarian vectors of Ocean Shores}
\label{fig:eff}
\end{minipage}\\[0.5ex]

I introduce a multi-objective CR.
Firstly, the efficiency of one equilibrium $y\in\CE$ is quantified without taking side with any efficient outcome, by defining with flexibility and no dictatorship, a \textit{disjunctive set} of guaranteed ratios of efficiency $R[y,\CF]=\bigcup_{z\in\CF}\CC(y/z)$.
Secondly, in MOGs, in average, there are many Pareto-Nash-equilibria. An efficiency \textit{guarantee} $\rho\in\REp$, must hold \textit{for each} equilibrium-outcome, inducing the conjunctive definition of the set of guaranteed ratios $R[\CE,\CF]=\bigcap_{y\in\CE} R[y,\CF]$.
Technically, $R[\CE,\CF]$ only depends on $\WST[\CE]$ and $\CF$. 
Finally, if two bounds on the efficiency $\rho$ and $\rho'$ are such that $\rho\PP\rho'$, then $\rho'$ brings no more information, hence, MO-CR is defined by using $\EFF$ on the guaranteed efficiency ratios $R[\WST[\CE],\CF]$.
This MO-CR satisfies a set of key properties detailed in Appendix \ref{app:axioms}.

\begin{definition}[MO Coordination Ratio]\label{def:mopoa}
Given an MOG, a vector $\rho\in\REp$ bounds the MOG's inefficiency if and only if it holds that:\quad
$\forall y\in\CE,\quad \exists z\in\CF,\quad y/z\WPP \rho$.
Consequently, the set of guaranteed ratios is defined by: 
$$R[\CE,\CF]\quad=\quad\bigcap_{y\in\CE}\bigcup_{z\in\CF}\CC(y/z)$$
and the MO-CR is defined by:\quad
$
\text{MO-CR}[\CE,\CF]=\EFF[R[\WST[\CE],\CF]]
$
\end{definition}
%
%
%
%


\begin{example}[The Efficiency ratios of Example \ref{ex:Ocean}]\label{ex:mopoashores}
I depict the efficiency ratios of Ocean Shores (intersected with $[0,1]^d$) which depend of $\WST[\CE]=\{(30,53), (40,38)\}$ and $\CF=\{(46,61),\ldots, (69,31)\}$. 
The part below the red line corresponds to $R[(30,53),\CF]$, the part below the blue line to $R[(40,38),\CF]$ and the yellow part below both lines is the conjunction on both equilibria $R[\WST[\CE],\CF]$. 
The freedom degree of deciding   what 

\vspace*{0.1cm}
\noindent
\begin{minipage}{0.55\columnwidth}
the overall efficiency should be is left free (no dictatorship) which results in several ratios in the MO-CR. 
Firstly, for each $\rho\in R[\CE,\CF]$, we have $\rho_1\leq 65\%$. Hence, whatever the choices of overall efficiency, one cannot guarantee more than \textit{65\% of efficiency on objective 1}.
Secondly, there are some subjectivities for which the efficiency on objective 2 is already total (100\%, if not more) while situation on objective 1 is worse and only 50\% can be obtained.
Thirdly, from 50\% to 65\% of subjective efficiency on objective 1, the various subjectivities range the  efficiency on objective 2 from 100\% to 75\%.
\end{minipage}
~~
\begin{minipage}{0.43\columnwidth}
\includegraphics[scale=0.44]{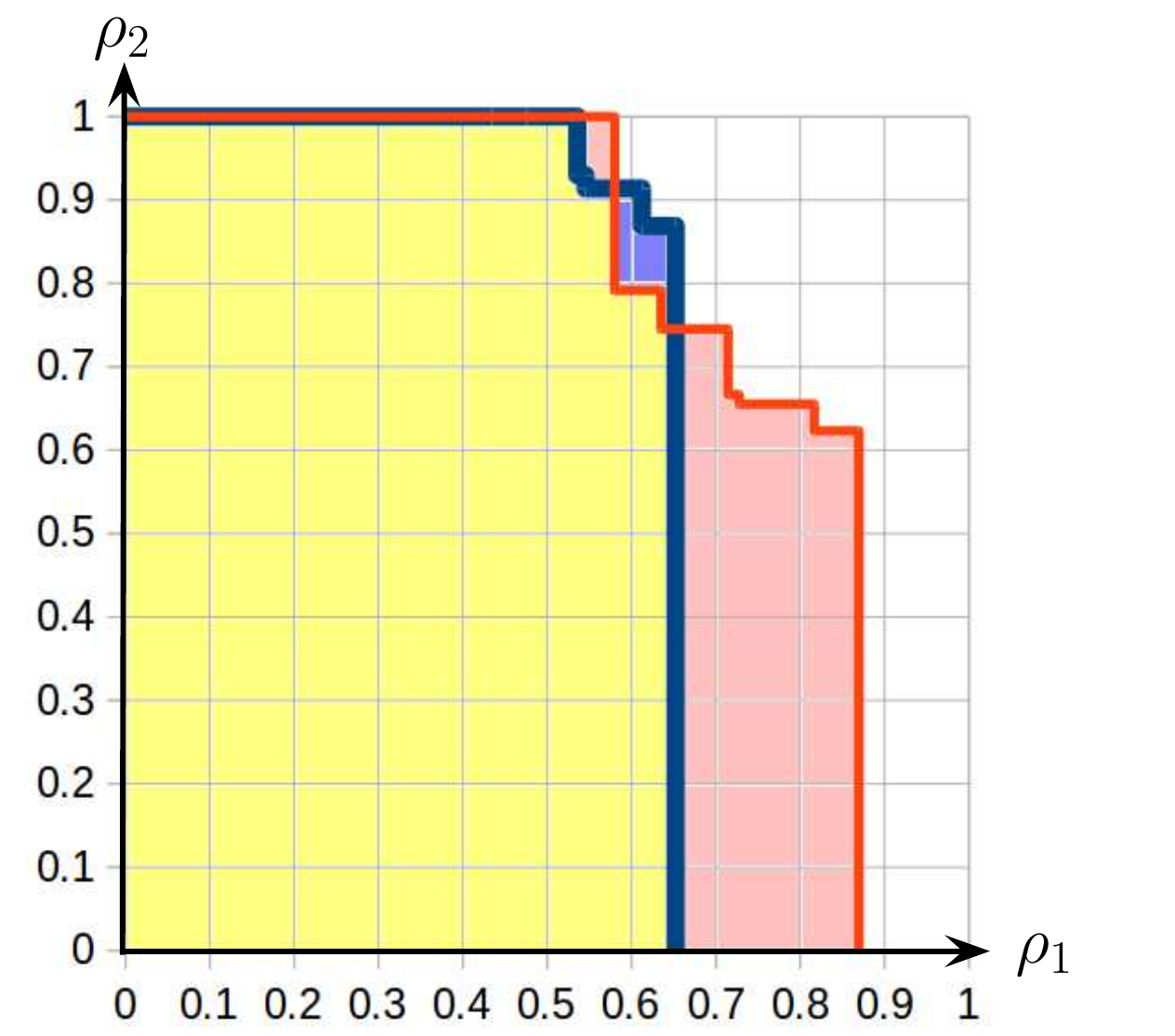}
\captionof{figure}{The MO-CR of Ocean Shores}
\end{minipage}
\end{example}

\vspace*{1cm}
\noindent
\begin{minipage}{0.46\columnwidth}
\centering
\includegraphics[scale=0.8]{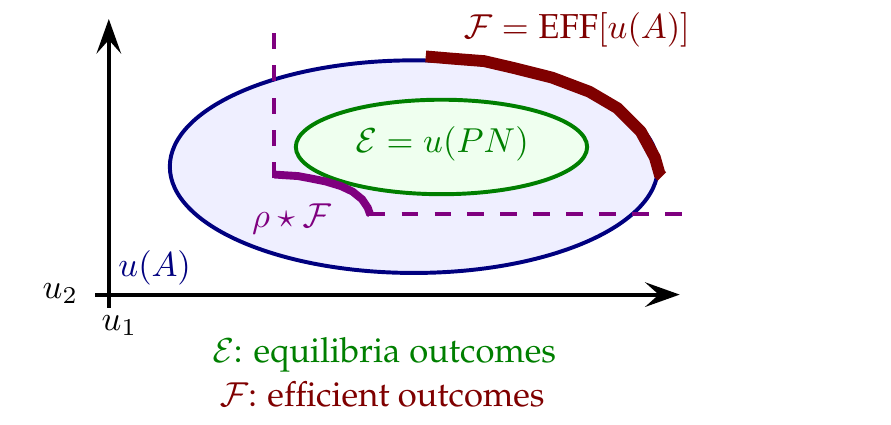}
\end{minipage}
~~~
\begin{minipage}{0.5\columnwidth}
Having $\rho$ in $\text{MO-CR}$ means that for each $y\in\CE$, there is an efficient outcome $z^{(y)}\in\CF$ such that $y$ dominates $\rho\star z^{(y)}$. In other words, if $\rho\in R[\CE,\CF]$, then each equilibrium satisfies the ratio of efficiency $\rho$. This means that equilibria-outcomes are at least as good as $\rho\star\CF$. That is: $\CE\subseteq(\rho\star\CF)+\REp$. Moreover, since $\rho$ is tight, $\CE$ sticks to $\rho\star\CF$.
\end{minipage}
\captionof{figure}{$\rho\in\text{MO-CR}$ bounds below $\CE$'s inefficiency:\quad $\CE~~\subseteq~~(\rho\star\CF)+\REp$}
\label{fig:mopoa}~\\

\section{Application to Tobacco Economy}\label{sec:application}

Tobacco consumption is a striking example of economic inefficiency induced by bounded rationalities.
According to the World Health Organisation \cite{world2011report}, 17.000 humans die each day of smoking related diseases (one person per 5 seconds). Meanwhile, addictive satisfaction and the financial revenue of the tobacco industry fosters consumption and production. 
According to the subjective theory of value \cite{walras1896elements}, some economists would say: ``Since consumers value the product, then the industry creates value.'' According to other health economists \cite{sloan2004price}, most consumers become addict before age 18, and as adults, would prefer a healthier life, but fail to opt-out.

\hspace*{0.5cm}The theory of MO games, based on a non-theory of value, just maintains vectorial evaluations and properly considers dollars, addiction and life expectancy as distinct objectives, with PN equilibria encompassing the relevant behaviours, even irrational. 

We \textbf{modelled} the tobacco industry and its consumers \cite{globalissuestobacco,madeley1999big} by a succinct MOG, with the help of (..) the association\footnote{I am grateful to Cl\'{e}mence Cagnat-Lardeau for her help on modelling tobacco economy.} ``\textit{Alliance contre le tabac}''. 
The set of agents is $N=\{\text{industry}, \nu \text{ consumers}\}$, where there are about $\nu=6.10^9$ prospective consumers.
Each consumer decides in $A^{\text{consumer}}=\{\text{not-smoking},\text{smoking}\}$ and cares about money, his addictive pleasure, and living. The industry only cares about money and decides in $A^{\text{industry}}=\{\text{not-active},\text{active},\text{advertise\&active}\}$. We have $\OBJ=\{\text{money},\text{reward},\text{life-expectancy}\}$. 
The tables below depict the evaluation vectors (over a life-time and ordered as in $\OBJ$) of one prospective consumer and the evaluations of the industry with respect to the number $\theta\in\{0,\ldots,\nu\}$ of consumers who decide to smoke. The money budget (already an aggregation) is expressed in kilo-dollars\footnote{Note that most states set the prices of tobacco, hence prices do not follow supply/demand.}\footnote{These numbers differ from \cite{sloan2004price} which aggregates everything (e.g. life expectancy) into money.}; the addictive reward is on an  ordinal scale $\{1,2,3,4\}$; life-expectancy is in years. 

\begin{center}
\begin{tabular}{r|ccc}
$v^{\text{consumer}}$ & not-active & active & advertise\&active\\
\hline
not-smoking &
$(48,1,75)$ & $(48,1,75)$ & $(48,1,75)$\\
smoking &
$(48,1,75)$ & $(12,3,65)$ & $(0,4,55)$\\
\multicolumn{4}{c}{}\\
$v^{\text{industry}}(\theta)$ & not-active & active & advertise\&active \\
\hline
$(\nu-\theta)\times$ &
$(0,-,-)$ & $(0,-,-)$ & $(0,-,-)$ \\
$+$ \hspace*{0.8cm}$\theta~~\times$ &
$(0,-,-)$ & $(26,-,-)$ & $(36,-,-)$ \\
\end{tabular}
\end{center}

\textbf{Pareto-Nash equilibria.} If the industry is active, then for the consumer, deciding to smoke or not depends on how the consumer subjectively values/weighs money, addiction and life expectancy: both decisions are encompassed by Pareto-efficiency. For the industry, advertise\&active is a dominant strategy. Consequently, Pareto-Nash-equilibria are all the action-profiles in which the industry decides advertise\&active.

\textbf{Efficiency.}
Since addiction is  irrational (detailed in Appendix \ref{app:tabac}), I focus on money and life-expectancy.
We have 
$
\CE
=
\{\theta(36,55)+(\nu-\theta)(48,75) \mid 0\leq\theta\leq \nu\}
$
and
$\CF=\{\nu(48,75)\}$,
where $\nu$ is the world's population, and $\theta$ the number of smokers. Since $\WST[\CE]=\{(36,55)\}$, the MO-CR is the singleton $\{(75\%,73\%)\}$: in the worst case, we lose 12k\$ and 20 years of life-expectancy per-consumer.  These Pareto-Nash-equilibria are the worst action-profiles for money and life-expectancy, a critical information that was not lost by this MOG and its MO-CR.

\textbf{Practical lessons.} Advertising tobacco fosters consumption. The association ``\textit{Alliance contre le tabac}'' passed a law for standardized neutral packets (April 3rd 2015), in order to annihilate all the benefits of branding, but only in France.  The model indicates that:
\begin{center}
\emph{This law will promote a higher economic efficiency}.
\end{center}

\section{Computation of the MO-CR}
\label{sec:computation}

In this section, I provide a polynomial-time algorithm  for the computation of MO-CR which relies on a very general procedure based on two phases:
\begin{enumerate}
\item Given a MOG, compute the worst equilibria $\WST[\CE]$ and the efficient outcomes $\CF$.
\item Given $\WST[\CE]$ and $\CF$, compute $\text{MO-CR}=\EFF[~R[~\WST[\CE]~,~\CF~]~]$.
\end{enumerate}
Depending on the input (normal form or compact representation), it adapts as follows.

\subsection{Computation of the MO-CR for Multi-objective Normal Forms}

For a MOG given in MO normal form (which representation length is $L=n\alpha^n d$), Phase 1 (computing $\WST[\CE]$ and $\CF$) is easy and takes time $O(L^2)$ (see Appendix \ref{app:proof}). For $d=2$, this lowers to $O(L\log_2(L))$. Let us denote the sizes of the outputs $q=|\WST[\CE]|$ and $m=|\CF|$. For normal forms, it holds that $q,m=O(|\CA|)=O(L)$.

\hspace*{0.5cm}For Phase 2, at first glance, the development of the intersection of unions $R[\WST[\CE],\CF]=\cap_{y\in\WST[\CE]}\cup_{z\in\CF}\CC(y/z)$ causes an exponential $m^q$. But fortunately, one can compute the $\text{MO-CR}$ in polynomial time. Below, $D^t$ is a set of vectors. Given two vectors $x,y\in\REp$, 
let $x\wedge y$ denote the vector defined by $\forall k\in\OBJ,~(x\wedge y)_k=\min\{x_k,y_k\}$ and recall that $\forall k\in\OBJ,~(x/y)_k=x_k/y_k$. Algorithm \ref{alg:givenEFoutputMOPoA:polynomial} is the development of  $\cap_{y\in\WST[\CE]}\cup_{z\in\CF}\CC(y/z)$, on a set-algebra of cone-unions. 
Appendix \ref{app:proof} shows that Algorithm \ref{alg:givenEFoutputMOPoA:polynomial} takes time $O((qm)^{2d-1}d)$, or $O((qm)^{2}\log_2(qm))$ for $d=2$. 

\vspace*{-0.4cm}
\restylealgo{ruled}
\begin{algorithm}[!h]
\KwIn{$\WST[\CE]=\{y^1,\ldots,y^q\}$ and $\CF=\{z^1,\ldots,z^m\}$}
\KwOut{$\text{MO-CR}=\EFF[R[\WST[\CE],\CF]]$}
\ \\ [-1.5ex]
{\bf create } $D^1\leftarrow \{y^1/z\in\REp~|~z\in\CF\}$\\
\For{$t=2,\ldots,q$}{
$D^t\leftarrow\EFF[\{\rho ~\wedge~ (y^t/z)~~|~~\rho\in D^{t-1},~~z\in\CF\}]$
}
{\bf return } $D^q$ 
\caption{Computing MO-CR in polynomial-time $O((qm)^{2d-1}d)$}
\label{alg:givenEFoutputMOPoA:polynomial}
\end{algorithm}
\vspace*{-0.4cm}

Having specified Phase 1 and 2 for normal forms, Theorem 1 follows:

\begin{theorem}[Computation of MO-CR]\label{th:master}
Given a MO normal form, one can compute the MO-CR in polynomial time $O(L^{4d-2})$.\quad If $d=2$, it lowers to $O(L^4\log_2(L))$.
\end{theorem}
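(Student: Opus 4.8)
The plan is to prove Theorem~\ref{th:master} by composing the two phases of the procedure: bound the running time of each, check that Algorithm~\ref{alg:givenEFoutputMOPoA:polynomial} indeed returns $\EFF[R[\WST[\CE],\CF]]$, and add up the costs. Since $d$ is a fixed constant, all dependence on $d$ will be absorbed into the $O(\cdot)$.

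First I would dispatch Phase~1. By the analysis in Appendix~\ref{app:proof}, from a normal form of length $L=n\alpha^n d$ one enumerates $A$, reads off $\CA=u(A)$, filters the Pareto-Nash condition agent by agent to get $\PE$, and applies $\EFF$ and $\WST$ to the appropriate image sets, all in time $O(L^2)$ (and $O(L\log_2 L)$ when $d=2$). The structural fact I need from this is the output size bound: since $\WST[\CE]\subseteq\CE\subseteq\CA$ and $\CF\subseteq\CA$ with $|\CA|=|u(A)|\le|A|=\alpha^n\le n\alpha^n d=L$, we get $q=|\WST[\CE]|=O(L)$ and $m=|\CF|=O(L)$.

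The core of the argument is the correctness of Algorithm~\ref{alg:givenEFoutputMOPoA:polynomial}. I would first record two elementary identities on the algebra of lower cones: for $x,y\in\REpp$, $\CC(x)\cap\CC(y)=\CC(x\wedge y)$ (a point lies weakly below both $x$ and $y$ iff it lies weakly below $x\wedge y$), and for any finite $S\subseteq\REpp$, $\bigcup_{\rho\in S}\CC(\rho)=\bigcup_{\rho\in\EFF[S]}\CC(\rho)$ (since $\rho'\WPP\rho$ forces $\CC(\rho)\subseteq\CC(\rho')$, so $\PP$-dominated generators are redundant, and every $\rho\in S$ is weakly dominated by some element of the finite antichain $\EFF[S]$). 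Combining these with finite distributivity of $\cap$ over $\cup$ yields, for two terms, $\bigl(\bigcup_{z}\CC(a/z)\bigr)\cap\bigl(\bigcup_{z'}\CC(b/z')\bigr)=\bigcup_{z,z'}\CC\bigl((a/z)\wedge(b/z')\bigr)$. An induction on $t$ then establishes the invariant $\bigcup_{\rho\in D^t}\CC(\rho)=\bigcap_{s=1}^{t}\bigcup_{z\in\CF}\CC(y^s/z)$, with each $D^t$ kept $\EFF$-reduced by the explicit $\EFF$ in the loop. At $t=q$ the right-hand side is exactly $R[\WST[\CE],\CF]$, so $D^q$ is an antichain generating $R[\WST[\CE],\CF]$; a short argument (every generator is a maximal element of the cone-union, and conversely) shows such a generating antichain is precisely $\EFF[R[\WST[\CE],\CF]]=\text{MO-CR}$.

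For the running time of Phase~2 I would invoke the bound of Appendix~\ref{app:proof}: $O((qm)^{2d-1}d)$, or $O((qm)^2\log_2(qm))$ when $d=2$. The step I expect to be the main obstacle is the accompanying size control: each candidate set $\{\rho\wedge(y^t/z)\mid\rho\in D^{t-1},z\in\CF\}$ has size $|D^{t-1}|\cdot m$, and one must show that after $\EFF$-filtering the surviving antichain stays polynomially bounded (each coordinate of a generator is one of at most $qm$ numbers, so an antichain over such vectors has size $O((qm)^{d-1})$), and that each $\EFF$-reduction costs at most a polynomial factor in $d$; the product over the $q$ iterations is what gives the stated exponent. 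Granting this, composition yields total time $O(L^2)+O((qm)^{2d-1}d)=O(L^2)+O(L^{2(2d-1)}d)=O(L^{4d-2})$ for fixed $d$, and for $d=2$, $O(L\log_2 L)+O((L^2)^2\log_2(L^2))=O(L^4\log_2 L)$, which is the claim.
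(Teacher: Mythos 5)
Your proposal is correct and takes essentially the same route as the paper: Phase~1 in $O(L^2)$ with $q,m=O(L)$, correctness of Algorithm~\ref{alg:givenEFoutputMOPoA:polynomial} via the cone-union algebra (intersections realized by componentwise minima, $\EFF$-reduction leaving the cone-union unchanged), the coordinate-counting bound $|D^t|\leq (qm)^{d-1}$, and composition yielding $O(L^{4d-2})$, resp.\ $O(L^4\log_2 L)$ for $d=2$. The only difference is cosmetic: you state the development as an induction invariant on $D^t$ instead of the paper's layer-graph/path ($\pi$) formulation, which is the same argument.
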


\subsection{Computation of the MO-CR for Multi-objective Compact Representations}

Compact representations of massively multi-agent games (e.g. MO graphical games, MO action-graph games) have a representation length $L$ that is polynomial with respect to the number of agents $n$ and the sizes of the action-sets $\alpha$. As $q=|\WST[\CE]|$ and $m=|\CF|$ can be exponentials $\alpha^n$ of this representation length, compact representations are algorithmically more challenging,  leaving open the computation of $\WST[\CE]$ and $\CF$ in Phase 1, and complicating the use of Algorithm  \ref{alg:givenEFoutputMOPoA:polynomial} in Phase 2. 
To overcome this, one can do MO approximations \cite{papadimitriou2000approximability}, by implementing an approximate Phase 1 which precision transfers to Phase 2 in polynomial time, as follows.

\begin{lemma}\label{lem:approx} Given  $\varepsilon_1,\varepsilon_2>0$ and approximations $E$ of $\CE$ and $F$ of $\CF$ in the sense that:
\begin{eqnarray}
\forall y\in\CE, \exists y'\in E,\quad y\WPP y' &
\quad \text{and}\quad & 
\forall y'\in E, \exists y\in\CE,\quad (1+\varepsilon_1)y'\WPP y\label{eq:approx:E}\\
\forall z'\in F, \exists z\in\CF,\quad z'\WPP z &
\quad \text{and}\quad & 
\forall z\in\CF, \exists z'\in F,\quad (1+\varepsilon_2)z\WPP z'\label{eq:approx:F}
\end{eqnarray}
it holds that $R[E,F]\subseteq R[\CE,\CF]$ and:
\begin{eqnarray}
\forall \rho\in R[\CE,\CF], \exists \rho'\in R[E,F],\quad (1+\varepsilon_1)(1+\varepsilon_2)\rho'\WPP\rho\label{eq:approx:R}
\end{eqnarray}
\end{lemma}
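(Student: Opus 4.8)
The plan is to unwind the definitions of $R[\cdot,\cdot]$ on both sides and track how the approximation guarantees propagate through the two quantifier alternations. Recall that $R[\CE,\CF]=\bigcap_{y\in\CE}\bigcup_{z\in\CF}\CC(y/z)$, i.e. $\rho\in R[\CE,\CF]$ iff for every $y\in\CE$ there is $z\in\CF$ with $y/z\WPP\rho$, equivalently $y\WPP\rho\star z$. So I would work throughout with the reformulation ``$\rho$ is guaranteed for the pair $(Y,Z)$'' $\Leftrightarrow$ $\forall y\in Y,\ \exists z\in Z,\ y\WPP \rho\star z$.

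First I would prove the inclusion $R[E,F]\subseteq R[\CE,\CF]$. Take $\rho\in R[E,F]$ and an arbitrary $y\in\CE$. By the left half of (\ref{eq:approx:E}) there is $y'\in E$ with $y\WPP y'$. Since $\rho$ is guaranteed for $(E,F)$, there is $z'\in F$ with $y'\WPP\rho\star z'$. By the left half of (\ref{eq:approx:F}) there is $z\in\CF$ with $z'\WPP z$, hence $\rho\star z'\WPP\rho\star z$ (scaling by the nonnegative vector $\rho$ preserves $\WPP$). Chaining $y\WPP y'\WPP\rho\star z'\WPP\rho\star z$ gives $y\WPP\rho\star z$ with $z\in\CF$; as $y$ was arbitrary, $\rho\in R[\CE,\CF]$. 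Note this direction uses only the one-sided ``covering'' halves of the two hypotheses and no $\varepsilon$'s — which is the expected sanity check, since a coarser description of $\CE$ (fewer constraints to satisfy, as $E$ under-approximates $\CE$ from below) and a finer-looking $F$ can only enlarge the guaranteed set, and we are claiming it actually shrinks it; the point is that $F$ sits \emph{below} $\CF$, so guaranteeing against $F$ is harder, and $E$ sits below $\CE$, which also makes the test against $E$ no easier in the relevant direction. I would spell this monotonicity out carefully because the direction of the inclusion is the first place a reader will want to check the signs.

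Next, the quantitative statement (\ref{eq:approx:R}). Fix $\rho\in R[\CE,\CF]$; I must produce $\rho'\in R[E,F]$ with $(1+\varepsilon_1)(1+\varepsilon_2)\rho'\WPP\rho$. The natural candidate is $\rho'=\rho/\big((1+\varepsilon_1)(1+\varepsilon_2)\big)$ (scalar division, componentwise), which automatically satisfies the inequality with equality; the work is to show this $\rho'$ is guaranteed for $(E,F)$. So take $y'\in E$. By the right half of (\ref{eq:approx:E}) there is $y\in\CE$ with $(1+\varepsilon_1)y'\WPP y$. Since $\rho\in R[\CE,\CF]$, there is $z\in\CF$ with $y\WPP\rho\star z$. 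By the right half of (\ref{eq:approx:F}) there is $z'\in F$ with $(1+\varepsilon_2)z\WPP z'$, i.e. $z\WPP z'/(1+\varepsilon_2)$. Combining: $(1+\varepsilon_1)y'\WPP y\WPP\rho\star z\WPP \rho\star\big(z'/(1+\varepsilon_2)\big)$, so $y'\WPP \big(\rho/((1+\varepsilon_1)(1+\varepsilon_2))\big)\star z'=\rho'\star z'$ with $z'\in F$. Hence $\rho'\in R[E,F]$, completing the proof.

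The main obstacle is bookkeeping the direction of each of the four one-sided hypotheses and making sure each is applied on the correct side: the two ``covering from below'' halves feed the set-inclusion, and the two ``$(1+\varepsilon)$-tightness'' halves feed the quantitative bound, and in each chain the scaling vector $\rho$ (resp. $\rho'$) must be moved across a $\WPP$, which is legitimate precisely because all vectors live in $\REp$ (and, for forming the ratios, the minor assumption $\CF\subseteq\REpp$ keeps denominators positive). No genuine difficulty beyond this; there is no need to invoke $\EFF$ or $\WST$ here since the claim is stated for $R[\CE,\CF]$ and $R[E,F]$ directly. One small point worth a sentence: the bound (\ref{eq:approx:R}) is stated for all $\rho\in R[\CE,\CF]$, so it in particular covers the efficient ratios $\EFF[R[\CE,\CF]]=\text{MO-CR}$, which is what Phase 2 ultimately consumes.
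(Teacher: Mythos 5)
Your proof is correct and follows essentially the same route as the paper's: the inclusion $R[E,F]\subseteq R[\CE,\CF]$ via the chain $y\WPP y'\WPP\rho\star z'\WPP\rho\star z$ using only the two covering halves, and the quantitative bound by taking $\rho'=(1+\varepsilon_1)^{-1}(1+\varepsilon_2)^{-1}\rho$ and chaining $(1+\varepsilon_1)y'\WPP y\WPP\rho\star z\WPP(1+\varepsilon_2)^{-1}\rho\star z'$. The only (immaterial) difference is that you quantify over all of $\CE$ where the paper works with $\WST[\CE]$, which if anything matches the lemma's statement more directly.
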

Equations (\ref{eq:approx:E}) and  (\ref{eq:approx:F}) state approximation bounds. Equations (\ref{eq:approx:E}) state that $(1+\varepsilon_1)^{-1}\CE$ bounds below $E$ which bounds below $\CE$. Equations (\ref{eq:approx:F}) state that $\CF$ bounds below $F$ which bounds below $(1+\varepsilon_2)\CF$. Crucially, whatever the sizes of $\CE$ and $\CF$, there exist such approximations $E$ and $F$ that are $O((1/\varepsilon_1)^{d-1})$ and $O((1/\varepsilon_2)^{d-1})$ sized  \cite{papadimitriou2000approximability}, yielding the approximation scheme below. 

\begin{theorem}[Approximation Scheme for MO-CR]\label{th:approx}
Given a compact MOG of representation length $L$, precisions $\varepsilon_1,\varepsilon_2>0$  and two algorithms  to compute approximations $E$ of $\CE$ and $F$ of $\CF$ in the sense of Equations (\ref{eq:approx:E}) and (\ref{eq:approx:F}) that take time $\theta_{\CE}(\varepsilon_1,L)$ and $\theta_{\CF}(\varepsilon_2,L)$, one can approximate $R[\CE,\CF]$ in the sense of Equation (\ref{eq:approx:R}) in time:
$$O\left(\theta_{\CE}(\varepsilon_1,L) \quad+\quad \theta_{\CF}(\varepsilon_2,L) \quad+\quad {(\varepsilon_1 \varepsilon_2)^{-(d-1)(2d-1)}}\right)$$
\end{theorem}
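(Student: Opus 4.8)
The plan is to combine the three pieces already assembled: the two black-box approximation algorithms for $\CE$ and $\CF$, Lemma \ref{lem:approx} (which converts approximations of the outcome sets into an approximation of $R[\CE,\CF]$), and Algorithm \ref{alg:givenEFoutputMOPoA:polynomial} together with the running-time bound $O((qm)^{2d-1}d)$ from Appendix \ref{app:proof}. The key observation that makes everything fit is the sizing result of Papadimitriou and Yannakakis \cite{papadimitriou2000approximability}: one may assume without loss of generality that the outputs $E$ and $F$ returned by the two subroutines satisfy $|E| = O((1/\varepsilon_1)^{d-1})$ and $|F| = O((1/\varepsilon_2)^{d-1})$, since an $\varepsilon$-approximate Pareto set of that cardinality always exists and any larger set can be thinned to one in time polynomial in its size without degrading the precision. (If the subroutines are not already economical, one inserts a thinning step, whose cost is dominated by the terms already present.)

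First I would run the two subroutines, costing $\theta_{\CE}(\varepsilon_1,L)$ and $\theta_{\CF}(\varepsilon_2,L)$, to obtain $E$ and $F$ meeting Equations (\ref{eq:approx:E}) and (\ref{eq:approx:F}), with the cardinality bounds above. Next I would feed $\WST[E]$ and $F$ into Algorithm \ref{alg:givenEFoutputMOPoA:polynomial}; note that computing $\WST[E]$ from $E$ is an $\EFF$-type filtering that costs $O(|E|^2 d)$, again negligible. Setting $q = |\WST[E]| \le |E| = O(\varepsilon_1^{-(d-1)})$ and $m = |F| = O(\varepsilon_2^{-(d-1)})$ in the bound $O((qm)^{2d-1}d)$ gives a Phase-2 cost of $O\big((\varepsilon_1\varepsilon_2)^{-(d-1)(2d-1)}\big)$. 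Adding the three contributions yields exactly the claimed running time. Finally, correctness of the output as an approximation of $R[\CE,\CF]$ in the sense of Equation (\ref{eq:approx:R}) is precisely the conclusion of Lemma \ref{lem:approx} applied to the sets $E$ and $F$ just computed, since $\EFF[R[\WST[E],F]]$ represents $R[E,F]$ and $R[E,F]\subseteq R[\CE,\CF]$ with the stated $(1+\varepsilon_1)(1+\varepsilon_2)$ slack.

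The only genuinely delicate point is the clean separation of concerns between the three phases, and in particular justifying that the sizes of the intermediate objects are governed by $\varepsilon_1,\varepsilon_2$ rather than by the (possibly exponential) true sizes $|\CE|,|\CF|$; this is where the \cite{papadimitriou2000approximability} cardinality bound is load-bearing, and I would state it explicitly as the reason the $m^q$-type blow-up feared for Algorithm \ref{alg:givenEFoutputMOPoA:polynomial} in the compact setting does not materialize. Everything else is bookkeeping: checking that the thinning and $\WST$-filtering steps are absorbed into the dominant terms, and that the precision composes multiplicatively exactly as Lemma \ref{lem:approx} asserts, so no new $\varepsilon$-dependence is introduced beyond the $(\varepsilon_1\varepsilon_2)^{-(d-1)(2d-1)}$ term.
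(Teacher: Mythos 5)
Your proposal is correct and follows essentially the same route as the paper: run the two black-box subroutines, feed the resulting sets into Algorithm \ref{alg:givenEFoutputMOPoA:polynomial}, obtain correctness from Lemma \ref{lem:approx}, and obtain the running time by plugging the cardinality bounds $|E|=O(\varepsilon_1^{-(d-1)})$, $|F|=O(\varepsilon_2^{-(d-1)})$ of \cite{papadimitriou2000approximability} into the $O((qm)^{2d-1}d)$ bound. Your explicit $\WST$-filtering of $E$ and the thinning remark are minor refinements the paper leaves implicit, not a different argument.
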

For MO graphical games, Phase 1 could be instantiated with approximate junction-tree algorithms on MO graphical models \cite{dubus2009multiobjective}. For MO symmetric action-graph games, in the same fashion, one could generalize existing algorithms \cite{jiang2007computing}. More generally, for $\WST[\CE]$ and $\CF$, one can also use meta-heuristics with experimental guarantees.

\section{Prospects}

Multi-objective games can be used as a behaviourally more realistic framework to model a wide set of games occurring in business situations ranging from carpooling websites to combinatorial auctions.
Also, studying the efficiency of MO generalizations of routing or Cournot-competitions \cite{guo2005price} could provide realistic economic insights.

%

\newpage

\bibliographystyle{splncs}
\bibliography{main}

\newpage
\appendix

\section{Why Smoothness will not work on multi-objective games}\label{app:smooth}

Most single-objective price of anarchy analytic results rely on a smoothness-analysis \cite{roughgarden2009intrinsic}. A crucial step for ``smoothness'' is to sum the best response inequalities: For a single-objective game and an equilibrium $a\in\PE$, from the best-response conditions $\forall i\in N, \forall b^i\in A^i, v^i(a)\geq v^i(b^i,a^{-i})$, one has:
$\forall b\in A, \sum_{i=1}^{n} v^i(a)\geq\sum_{i=1}^{n} v^i(b^i,a^{-i})$.
However, the Pareto-Nash-equilibrium conditions are rather: 
$\forall i\in N, \forall b^i\in A^i, v^i(b^i,a^{-i}) \not\PP v^i(a)$.
As shown in the following counter-example, such $\not\PP$ relations cannot be summed:
$$
\left(\begin{array}{c} 2\\4 \end{array}\right)
\not\PP
\left(\begin{array}{c} 3\\1 \end{array}\right)
\text{ and }
\left(\begin{array}{c} 3\\1 \end{array}\right)
\not\PP
\left(\begin{array}{c} 1\\2 \end{array}\right)
~~~~~~~~\text{ but }~~~~~~~~
\left(\begin{array}{c} 2\\4 \end{array}\right)
+
\left(\begin{array}{c} 3\\1 \end{array}\right)
\PP
\left(\begin{array}{c} 3\\1 \end{array}\right)
+
\left(\begin{array}{c} 1\\2 \end{array}\right)
$$
Consequently, Smoothness-analysis does not encompass Pareto-Nash equilibria, regardless of the efficiency measurement chosen.

\section{Properties of the Multi-objective Coordination Ratio}\label{app:axioms}

The Multi-objective Coordination Ratio fulfils a list of key good properties for the thorough measurement of the multi-objective efficiency of MO games.

\subsection{Worst case guarantee on equilibria outcomes\\ and No dictatorship on efficient outcomes}

Each vectorial efficiency ratio that the MO-CR states, bounds below the efficiency for each equilibrium outcome, compared to an existing efficient outcome:
$$
\forall \rho\in\text{MO-CR}[\CE,\CF],\quad
\forall y\in\CE,\quad
\exists z\in\CF,\quad
y/z\WPP\rho
$$ 
The process of measuring efficiency by MO-CR does not imply any choice in $\CF$ that would impose a point-of-view telling what efficiency should be (e.g. no five-year plans).

\subsection{Multi-objective ratio-scale}

Given $\CE$, $\CF$ and $r\in\REpp$, it holds that:
\begin{eqnarray}
\text{MO-CR}[\CE,\CF] &\quad\subseteq\quad & \REp\label{eq:ratio:6}\\
\text{MO-CR}[\{(0,\ldots,0)\},\CF] & \quad=\quad & \{(0,\ldots,0)\}\label{eq:ratio:7}\\
\text{MO-CR}[r\star\CE,\CF] & \quad=\quad & r\star\text{MO-CR}[\CE,\CF]\label{eq:ratio:8}\\
\text{MO-CR}[\CE,r\star\CF] & \quad=\quad & \text{MO-CR}[\CE,\CF]/r\label{eq:ratio:9}\\
\CE\subseteq\CF & \quad\Leftrightarrow\quad & (1,\ldots,1)\in \text{MO-CR}[\CE,\CF]\label{eq:ratio:10}
\end{eqnarray}

Equation (\ref{eq:ratio:6}) states that the MO-CR is expressed in the multi-objective space. It is worth noting that while $\text{MO-CR}[\CE,\CF]\subseteq[0,1]^d$ is a more classical choice, MO-CR also allows for measurements of over-efficiencies. (E.g. if $\CF$ is a family-car and $\CE$ is a Lamborghini, then there is over-efficiency on the speed objective.)

\hspace*{0.5cm} Equations (\ref{eq:ratio:7}), (\ref{eq:ratio:8}) and (\ref{eq:ratio:9}) state that MO-CR is sensitive on each objective to multiplications of the outcomes. For instance, if $\CE$ is three times better on objective $k$, then so is MO-CR. If there are twice better opportunities of efficiency in $\CF$ on objective $k'$, then MO-CR is one half on objective $k'$. In other words, the efficiency of each objective independently reflects into the MO-CR in a ratio-scale.

\hspace*{0.5cm} If all equilibria outcomes are efficient (i.e. $\CE\subseteq\CF$), then this must imply that according to the MO-CR, the MO game is fully efficient, that is: $(1,\ldots,1)\in \text{MO-CR}[\CE,\CF]$. The MO-CR seems to be the only multi-objective ratio-scale measurement that fulfils Equation (\ref{eq:ratio:10}) while being a worst case guarantee on equilibria outcomes with no dictatorship on what efficiency should be.

\hspace*{0.5cm} It is also worth noting that MO-CR is MO-monotonic with respect to $\CE$ and $\CF$. For $X,Y\subseteq\REp$, let $X\unrhd Y$ denote $Y\subseteq\CC(X)$ where $\CC(X)=\cup_{x\in X}\CC(x)$ (i.e. $X$ dominates $Y$). Then it holds that:
\begin{eqnarray}
\CE\unrhd\CE' &\quad\Rightarrow\quad & \text{MO-CR}[\CE,\CF]\unrhd\text{MO-CR}[\CE',\CF]\\
\CF\unrhd\CF' &\quad\Rightarrow\quad & \text{MO-CR}[\CE,\CF']\unrhd\text{MO-CR}[\CE,\CF]
\end{eqnarray}

\section{Bounded Rationality in Tobacco Consumption}\label{app:tabac}

%
According to the \textit{intrinsic theory of value} \cite{adam1776inquiry}, the value of a cigarette objectively amounts to the quantities of raw materials used for its production, or is the combination of the labour times put into it \cite{marx1867kapital}. However, each economic agent needs to keep the freedom to evaluate and act how he pleases, in order to keep his good will and some economic efficiency, as observed in the end of the Soviet Union.
According to the \textit{subjective theory of value} \cite{adam1776inquiry}, the value of a cigarette amounts to the price an agent is willing to pay for it. Since the consumers value the product, then the industry creates value \cite{walras1896elements}. 
However, this disregards what the disastrous consequence is on life expectancy, belittles 7.500.000 deaths-per-year and emphasizes the bounded rationality of behaviours. 
 While for some health economists, consuming a cigarette is a rational choice, as one values pleasure more than life expectancy, for others, consumers are stuck into addiction before becoming adults.  The truth is likely between these two extreme points of view \cite{sloan2004price}: Economic agents discount the future at a rate of 6\% per-year, hence a day of life in 40 years is valued 10 times less than now, leading to overweighting the actual smoking pleasure and to irrational behaviours with respect to preferences over a full lifetime. Agents behave according to objectives (e.g. addictive satisfaction) that they would avoid if they had the full experience of their lifetime (e.g. a lung cancer with probability $1/2$) and a sufficient will (e.g. quit smoking). Time discounting also explains other non-sustainable behaviours like over-fishing catastrophes.

\section{Proofs}\label{app:proof}

\subsection{Phase 1 for Normal Forms, Correctness of Algorithm \ref{alg:givenEFoutputMOPoA:polynomial} and Theorem \ref{th:master}}
\label{sec:mopoa:poly}

Phase 1 is easy, if the MOG is given in normal form. The MOG is made of the MO evaluations of each agent on each action-profile, that is: $O(n\alpha^n)$ vectors. Hence, the computation of $u(A)$ requires
for each $a\in A$, the addition of $n$ vectors. Therefore, the computation of $u(A)$ takes time $O(\alpha^n n d)$ (linear in the size of the input) and yields $O(\alpha^n)$ vectors. The computation of $\CF=\EFF[u(A)]$ given $u(A)$ takes time $O(|u(A)|^2 d)=O(\alpha^{2n} d)$. To conclude, the computation of $\CF$ takes time $O(n \alpha^n d + \alpha^{2n} d)$, which is polynomial (quadratic) in the size of the input. If $d=2$, this can be significantly lowered to $O(n \alpha^n d + \alpha^{n}\log_2(\alpha^n)d)=O(n\alpha^n \log_2(\alpha) )$.

\hspace*{0.7cm}The computation of $\WST[\CE]$ can be achieved by first computing $\PE$. For this purpose, for each agent $i\in N$ and each adversary action profile $a^{-i}\in A^{-i}$, one has to compute which individual actions give a Pareto-efficient evaluation in $v^i(A^i,a^{-i})$, in order to mark which action-profiles can be a $\PE$ from $i$'s point of view. Overall, computing $\PE$ takes time $O(n \alpha^{n-1} \alpha^2 d)$. Using back $u(A)$, computing $\CE=u(\PE)$ is straightforward. Again,  the computation of $\WST[\CE]$ given $\CE$ takes time $O(|\CE|^2 d)=O(\alpha^{2n} d)$. To sum up, the computation of $\WST[\CE]$ takes time $O(n \alpha^{n+1} d + \alpha^{2n} d)$. If $d=2$, this lowers to $O(n\alpha^n \log_2(\alpha))$.\\

\hspace*{0.7cm}In order to compute $\text{MO-CR}=\EFF[R[\WST[\CE],\CF]]$,
let us study the structure of $\bigcap_{y\in\WST[\CE]}\bigcup_{z\in\CF}\CC(y/z)$, by restricting a set-algebra to the following objects:
\begin{definition}[Cone-Union]
\label{def:coneunion}
For a set of vectors $X\subseteq\REp$, the Cone-Union $\CC(X)$ is:
$$
\CC(X)
~~=~~\bigcup_{x\in X}\CC(x)
~~~~=\{y\in\REp ~~|~~ \exists x\in X, x\WPP y\}
$$ 
Let $\CC$ denote the set of all cone-unions of $\REp$.
\end{definition}
To define an algebra on $\CC$, one can supply $\CC$ with $\cup$ and $\cap$.
\begin{lemma}[On the Set-Algebra $(\CC,\cup,\cap)$]\label{prop:algebra}~\\
Given two descriptions of cone-unions $X^1,X^2\subseteq\REp$, we have:
$$\CC(X^1)\cup\CC(X^2)=\CC( X^1\cup X^2 )$$
Given two descriptions of cones $x^1,x^2\in\REp$, we have:
$$\CC(x^1)\cap\CC(x^2)=\CC(x^1\wedge x^2)$$
where $x^1\wedge x^2\in\REp$ is: $\forall k\in\OBJ, (x^1\wedge x^2)_k=\min\{x^1_k,x^2_k\}$.\\
Given two descriptions of cone-unions $X^1,X^2\subseteq\REp$, we have:
\begin{eqnarray*}
\CC(X^1)\cap\CC(X^2)
&=&\left(\cup_{x^1\in X^1}\CC(x^1)\right)\cap\left(\cup_{x^2\in X^2}\CC(x^2)\right)\\
&=&\bigcup_{(x^1,x^2)\in X^1\times X^2}\CC(x^1)\cap\CC(x^2)\\
&=&\bigcup_{(x^1,x^2)\in X^1\times X^2}\CC(x^1\wedge x^2)\\
&=&\CC( X^1\wedge X^2 )
\end{eqnarray*}
where $X^1\wedge X^2=\{x^1\wedge x^2~|~x^1\in X^1,~x^2\in X^2\}\subseteq\REp$.\\
Therefore, $(\CC,\cup,\cap)$ is stable, and then is a set-algebra.
\end{lemma}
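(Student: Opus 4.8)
The plan is to verify the three displayed identities in turn and then observe that together they establish that $\CC$ is closed under $\cup$ and $\cap$, which is exactly the claim that $(\CC,\cup,\cap)$ is a set-algebra. All three identities are proved by unfolding the definition of $\CC(\cdot)$ in Definition \ref{def:coneunion} and reasoning pointwise on $\REp$.

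First I would dispatch the union identity $\CC(X^1)\cup\CC(X^2)=\CC(X^1\cup X^2)$. This is immediate from $\CC(X)=\bigcup_{x\in X}\CC(x)$: a union over $X^1$ together with a union over $X^2$ is the same as a single union over $X^1\cup X^2$, since $\bigcup_{x\in X^1}\CC(x)\cup\bigcup_{x\in X^2}\CC(x)=\bigcup_{x\in X^1\cup X^2}\CC(x)$. No case analysis is needed here. Next I would prove the single-cone intersection identity $\CC(x^1)\cap\CC(x^2)=\CC(x^1\wedge x^2)$: a point $y\in\REp$ lies in the left-hand side iff $x^1\WPP y$ and $x^2\WPP y$, i.e. iff for every $k\in\OBJ$ both $x^1_k\geq y_k$ and $x^2_k\geq y_k$, which holds iff $\min\{x^1_k,x^2_k\}\geq y_k$ for every $k$, i.e. iff $(x^1\wedge x^2)\WPP y$, i.e. iff $y\in\CC(x^1\wedge x^2)$. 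The only thing being used is that the componentwise minimum dominates $y$ exactly when both arguments do.

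The general intersection identity $\CC(X^1)\cap\CC(X^2)=\CC(X^1\wedge X^2)$ then follows by the chain of equalities displayed in the statement: expand each cone-union as a union of cones, distribute the intersection over the two unions (intersection of unions equals the union, over all pairs, of pairwise intersections — valid for arbitrary index sets), apply the single-cone identity to each pair $(x^1,x^2)$ to get $\CC(x^1\wedge x^2)$, and finally recognize the resulting union over all pairs as $\CC(X^1\wedge X^2)$ by the definition of $X^1\wedge X^2$ and of $\CC(\cdot)$ on a set. I would also remark that since $X^1\wedge X^2\subseteq\REp$ is again a valid description of a cone-union and $X^1\cup X^2\subseteq\REp$ likewise, both $\CC(X^1)\cup\CC(X^2)$ and $\CC(X^1)\cap\CC(X^2)$ are members of $\CC$; hence $\CC$ is stable under the two operations, which is the definition of a set-algebra and completes the proof.

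There is no serious obstacle: the statement is essentially a bookkeeping exercise, and the only point that deserves a word of care is the distributivity step $\left(\bigcup_{x^1}\CC(x^1)\right)\cap\left(\bigcup_{x^2}\CC(x^2)\right)=\bigcup_{(x^1,x^2)}\bigl(\CC(x^1)\cap\CC(x^2)\bigr)$, which is the infinitary distributive law for intersection over union and holds for arbitrary (here finite) families. Everything else is a direct translation between set membership and the componentwise inequalities defining weak Pareto dominance.
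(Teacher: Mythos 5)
Your proof is correct and follows the same route as the paper, which simply states that the three identities derive from set calculus; you have just written out the componentwise and distributivity details explicitly. Nothing to correct.
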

\begin{proof}
The three properties derive from set calculus.
\end{proof}
The main consequence of Lemma \ref{prop:algebra} is that  $R[\WST[\CE],\CF]=\cap_{y\in\WST[\CE]}\cup_{z\in\CF}\CC(y/z)$ is a cone-union. Moreover, one can do the development for $\cap_{y\in\WST[\CE]}\cup_{z\in\CF}\CC(y/z)$ within the cone-unions, using distributions and developments.

\begin{remark}\label{rk:app:cone}
For a finite set $X\subseteq\REp$, we have: $\CC(X)=\CC(\EFF[X])$.
\end{remark}
\begin{proof}
Firstly, we prove $\CC(X)\subseteq\CC(\EFF[X])$.
If $y\in\CC(X)$, then there exists $x\in X$ such that $x\WPP y$. There are two cases, $x\in\EFF[X]$ and $x\not\in\EFF[X]$. If $x\in\EFF[X]$, then $y\in\CC(\EFF[X])$, by definition of a cone-union.
Otherwise, if $x\not\in\EFF[X]$, then there exists $z\in X$ such that $z\PP x$. And since $X$ is finite, we can find such a $z$ in $\EFF[X]$, by iteratively taking $z'\PP z$ until $z\in\EFF[X]$, which will happen because $X$ is finite and $\PP$ is transitive and irreflexive. Hence, there exists $z\in\EFF[X]$ such that $z\PP x\WPP y$ and then $z\WPP y$. Consequently, $y\in\CC(\EFF[X])$, by definition of a cone-union.

Conversely, $Y\subseteq X\Rightarrow \CC(Y)\subseteq\CC(X)$  proves $\CC(\EFF[X])\subseteq\CC(X)$.
\end{proof}

As a consequence of Remark \ref{rk:app:cone}, for $x\in\REp$, a simple cone $\CC(x)$ is fully described by its summit $x$. The main consequence of this remark is that $\CC(X)$ can be fully described and represented by $\EFF[X]$.
For instance, since $R[\WST[\CE],\CF]$ is a cone-union (thanks to Lemma \ref{prop:algebra}), and since $\text{MO-CR}=\EFF[R[\WST[\CE],\CF]]$ (by definition of the MO-CR), then $R[\WST[\CE],\CF]$ is fully represented (as a cone-union) by the MO-CR, which means that $R[\WST[\CE],\CF]=\CC(\text{MO-CR})$.\\


 Recall that $q=|\WST[\CE]|$ and $m=|\CF|$. In this subsection, we also denote $\WST[\CE]=\{y^1,\ldots,y^q\}$ and $\CF=\{z^1,\ldots,z^m\}$.
Let $\CA_q^m$ denote the set of functions $\pi$ from $\{1,\ldots,q\}$ to $\{1,\ldots,m\}$. (We have: $|\CA_q^m|=m^q$.)
\begin{corollary}[The cone-union of MO-CR]~\\
Given $\WST[\CE]=\{y^1,\ldots,y^q\}$ and $\CF=\{z^1,\ldots,z^m\}$, we have:
$$R[\WST[\CE],\CF]=\bigcup_{\pi\in\CA_q^m}\bigcap_{t=1}^{q} \CC(y^t / z^{\pi(t)})$$
and therefore:
$$\text{MO-CR}=\EFF\left[\left\{\bigwedge\nolimits_{t=1}^{q}y^t / z^{\pi(t)}~~|~~\pi\in\CA_q^m\right\}\right]$$
\end{corollary}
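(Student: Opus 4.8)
The plan is to unfold the definition $R[\WST[\CE],\CF]=\bigcap_{t=1}^{q}\bigcup_{z\in\CF}\CC(y^t/z)$ by repeated application of the distributive laws established in Lemma \ref{prop:algebra}, and then read off the $\EFF$ description using Remark \ref{rk:app:cone}. First I would rewrite each factor as a cone-union: by Lemma \ref{prop:algebra} (first identity), $\bigcup_{z\in\CF}\CC(y^t/z)=\CC(\{y^t/z^1,\ldots,y^t/z^m\})=\CC(Y^t)$ where $Y^t=\{y^t/z^j\mid 1\le j\le m\}$. So $R[\WST[\CE],\CF]=\bigcap_{t=1}^{q}\CC(Y^t)$, an intersection of $q$ cone-unions.

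Next I would iterate the third identity of Lemma \ref{prop:algebra}, namely $\CC(X^1)\cap\CC(X^2)=\CC(X^1\wedge X^2)$ where $X^1\wedge X^2=\{x^1\wedge x^2\mid x^1\in X^1,x^2\in X^2\}$. A straightforward induction on $q$ gives $\bigcap_{t=1}^{q}\CC(Y^t)=\CC\bigl(Y^1\wedge\cdots\wedge Y^q\bigr)$, and unravelling the nested $\wedge$ of the $Y^t$'s, an element of $Y^1\wedge\cdots\wedge Y^q$ is exactly a term $\bigwedge_{t=1}^{q} y^t/z^{\pi(t)}$ for some choice function $\pi\in\CA_q^m$ picking, for each $t$, an index $\pi(t)\in\{1,\ldots,m\}$ (using associativity and commutativity of the componentwise $\min$). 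Dually, since $\CC(y^t/z^{\pi(t)})=\bigcap$-factor and the intersection of simple cones $\bigcap_{t}\CC(x^t)=\CC(\bigwedge_t x^t)$ (second identity of Lemma \ref{prop:algebra}, iterated), one directly obtains
$$R[\WST[\CE],\CF]=\bigcup_{\pi\in\CA_q^m}\bigcap_{t=1}^{q}\CC(y^t/z^{\pi(t)})=\bigcup_{\pi\in\CA_q^m}\CC\!\left(\bigwedge\nolimits_{t=1}^{q} y^t/z^{\pi(t)}\right)=\CC\!\left(\Bigl\{\bigwedge\nolimits_{t=1}^{q} y^t/z^{\pi(t)}\;\Big|\;\pi\in\CA_q^m\Bigr\}\right),$$
which is the first displayed claim.

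For the second display, I would apply the definition $\text{MO-CR}=\EFF[R[\WST[\CE],\CF]]$ together with Remark \ref{rk:app:cone}: since $R[\WST[\CE],\CF]=\CC(X)$ with $X=\{\bigwedge_{t=1}^{q}y^t/z^{\pi(t)}\mid\pi\in\CA_q^m\}$ a finite set, and since $\EFF[\CC(X)]=\EFF[X]$ (the efficient frontier of a cone-union is the efficient frontier of any generating set — this needs a one-line check that every vector of $X$ is weakly dominated by some vector of $\EFF[X]$ and that no vector of $\EFF[X]$ is dominated within $\CC(X)$, both immediate from Remark \ref{rk:app:cone}), we conclude $\text{MO-CR}=\EFF[X]$, as stated. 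The only mildly delicate point — and the one I would be most careful about — is the bookkeeping in the induction that turns the $q$-fold nested $\wedge$ over the sets $Y^t$ into a single indexation by choice functions $\pi\in\CA_q^m$; everything else is a direct invocation of Lemma \ref{prop:algebra} and Remark \ref{rk:app:cone}. No genuine obstacle is expected, since the set-algebra has already been set up; the corollary is essentially the observation that the finite distributive lattice generated by the cones admits a disjunctive normal form of size $m^q$.
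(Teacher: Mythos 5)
Your proposal is correct and follows essentially the same route as the paper: the paper also develops $\bigcap_{y}\bigcup_{z}\CC(y/z)$ into a union of intersections indexed by choice functions $\pi\in\CA_q^m$ (phrased there as paths in a layer graph), collapses each intersection of simple cones via Lemma \ref{prop:algebra} to $\CC(\bigwedge_t y^t/z^{\pi(t)})$, and then identifies $\text{MO-CR}$ with the efficient set of the generating vectors using Remark \ref{rk:app:cone}. Your explicit check that $\EFF[\CC(X)]=\EFF[X]$ is a slightly more careful rendering of the paper's closing step ($R[\WST[\CE],\CF]=\CC(\text{MO-CR})$), but it is the same argument.
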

\begin{proof}
For the first statement, just think to a development. We write down $R[\WST[\CE],\CF]=\cap_{y\in\WST[\CE]}\cup_{z\in\CF}\CC(y/z)$ into the layers just below. There is one layer per $y^t$ in $\WST[\CE]=\{y^1,\ldots,y^t,\ldots,y^q\}$:
$$
\begin{array}{ccccccccccl}
& ( & \CC(\frac{y^1}{z^1}) & \cup & \CC(\frac{y^1}{z^2}) & \cup & \ldots & \cup & \CC(\frac{y^1}{z^m})& )&\text{layer 1}\\
\bigcap & ( & \CC(\frac{y^2}{z^1}) & \cup & \CC(\frac{y^2}{z^2}) & \cup & \ldots & \cup & \CC(\frac{y^2}{z^m})& )&\text{layer 2}\\
&&&&&\vdots\\
\bigcap & ( & \CC(\frac{y^q}{z^1}) & \cup & \CC(\frac{y^q}{z^2}) & \cup & \ldots & \cup & \CC(\frac{y^q}{z^m})& )&\text{layer q}
\end{array}
$$
Imagine the simple cones as vertices and imagine edges going from each vertex of layer $t$ to each vertex of the next layer $(t+1)$. The development into a union outputs as many intersection-terms as paths from the first layer to the last one. 
Let the function $\pi:\{1,\ldots,q\}\rightarrow\{1,\ldots,m\}$ denote a path from layer $1$ to layer $q$, where $\pi(t)$ is the vertex chosen in layer $t$.
Consequently, in the result of the development into an union, each term is an intersection $\bigcap_{t=1}^{q} \CC(y^t / z^{\pi(t)})$.

The second statement results from the first statement, Lemma \ref{prop:algebra} and Remark \ref{rk:app:cone}.
\begin{eqnarray*}
R[\WST[\CE],\CF]
&=&\bigcup_{\pi\in\CA_q^m}\bigcap_{t=1}^{q} \CC(y^t / z^{\pi(t)})\\
&=&\bigcup_{\pi\in\CA_q^m} \CC\left(\bigwedge_{t=1}^{q} y^t / z^{\pi(t)}\right)\\
&=&\CC\left(\left\{\bigwedge\limits_{t=1}^{q}y^t / z^{\pi(t)}~~|~~\pi\in\CA_q^m\right\}\right)
\end{eqnarray*}
That $R[\WST[\CE],\CF]=\CC(\text{MO-CR})$ concludes the proof.
\end{proof}

Ultimately, this proves the \textbf{correctness} of Algorithm \ref{alg:givenEFoutputMOPoA:polynomial} for the computation of MO-CR, given $\WST[\CE]=\{y^1,\ldots,y^q\}$ and $\CF=\{z^1,\ldots,z^m\}$. It consists in the iterative development of the intersection $R(\CE,\CF)$, which can be seen as dynamic programming on the paths of the layer graph. 
For $k\in\{1,\ldots,q\}$, we denote $D^t$ the description of the cone-union corresponding to the intersection: 
$$\CC(D^t)=\cap_{l=1}^{t} \cup_{z\in\CF} \CC(y^l/z)$$
Recursively, for $t>1$,  $\CC(D^t)=\CC(D^{t-1})~\cap~(\cup_{z\in\CF}~\CC(y^{t}/z))$.
From Lemma \ref{prop:algebra} and Remark \ref{rk:app:cone}, in order to develop, we 
then have to iterate the following:
$$
D^t=\EFF[\{\rho ~\wedge~ (y^t/z)~~|~~\rho\in D^{t-1},~~z\in\CF\}]
$$

We now proceed with the \textbf{time complexity} of Algorithm \ref{alg:givenEFoutputMOPoA:polynomial}. At first glance, since there are $m^q$ paths in the layer graph, then there are $O(m^q)$ elements in MO-CR. Fortunately, they are much less, because we have:
\begin{theorem}[MO-CR is polynomially-sized]~\\
\label{th:mopoa:poly}
Given a MOG and denoting $d=|\OBJ|$, $q=|\WST[\CE]|$ and $m=|\CF|$, we have:
$$|\text{MO-CR}|\leq (qm)^{d-1}$$
\end{theorem}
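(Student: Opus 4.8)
The plan is to bound the number of efficient (Pareto-maximal) vectors in the set
$S=\left\{\bigwedge_{t=1}^{q}y^t/z^{\pi(t)}\ \middle|\ \pi\in\CA_q^m\right\}$,
which by the preceding Corollary equals (up to taking $\EFF$) the MO-CR. The naive bound is $|S|\le m^q$, which is exponential; the point is that the componentwise-minimum structure collapses this drastically. The key observation is that each coordinate of a vector $\bigwedge_{t=1}^q y^t/z^{\pi(t)}$ is of the form $\min_{t}\{y^t_k/z^{\pi(t)}_k\}$, i.e. for each objective $k\in\OBJ$ it is the minimum over $t\in\{1,\dots,q\}$ of entries drawn from the finite set $V_k=\{y^t_k/z^j_k\mid t\in\{1,\dots,q\},\ j\in\{1,\dots,m\}\}$, which has at most $qm$ distinct values. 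Hence every vector in $S$ has, in each of its $d$ coordinates, a value belonging to a fixed set of size at most $qm$; so $|S|\le (qm)^d$. That already gives a polynomial bound, but one factor too many.

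To shave off the last factor and get $(qm)^{d-1}$, the plan is to use the fact that we only need $\EFF[S]$, not all of $S$, together with a projection/antichain argument on the first $d-1$ coordinates. First I would note that the map sending a vector $\rho\in\REp$ to its projection $(\rho_1,\dots,\rho_{d-1})$ onto the first $d-1$ objectives is injective on $\EFF[S]$: if two distinct efficient vectors $\rho,\rho'$ shared their first $d-1$ coordinates, they would differ only in coordinate $d$, and then one would Pareto-dominate the other, contradicting efficiency (this uses that neither is dominated, and that $\PP$ requires weak dominance in every coordinate plus strict somewhere — so equality in $d-1$ coordinates plus any inequality in the last forces dominance). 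Consequently $|\EFF[S]|$ is at most the number of distinct values the tuple $(\rho_1,\dots,\rho_{d-1})$ takes over $\rho\in S$, and since coordinate $k$ ranges over $V_k$ with $|V_k|\le qm$, this is at most $(qm)^{d-1}$. Combining with $\text{MO-CR}=\EFF[S]$ yields $|\text{MO-CR}|\le(qm)^{d-1}$.

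The main obstacle I anticipate is making the injectivity-of-projection step fully rigorous, in particular handling the boundary/tie cases cleanly: one must be careful that ``differ only in the last coordinate'' genuinely forces Pareto-dominance rather than incomparability, which is where the precise definition of $\PP$ in \eqref{eq:pp} is essential, and one must also confirm that $\EFF$ behaves well on the finite set $S$ (it does, by Remark \ref{rk:app:cone} and finiteness of $S$). A secondary point to get right is the exact counting bound $|V_k|\le qm$ — this is immediate since $V_k$ is the image of a map from $\{1,\dots,q\}\times\{1,\dots,m\}$ — but one should make sure the argument does not silently overcount by assuming all combinations of coordinate-values are attained; the projection argument sidesteps this by bounding $\EFF[S]$ directly through an injection into $V_1\times\cdots\times V_{d-1}$ rather than through $S$ itself.
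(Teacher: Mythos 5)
Your proposal is correct and follows essentially the same route as the paper: you bound each coordinate of a wedge-vector $\bigwedge_t y^t/z^{\pi(t)}$ by the at most $qm$ values $y^t_k/z^j_k$, and then remove one factor of $qm$ via the observation that on an efficient (Pareto-maximal) set the projection onto the first $d-1$ coordinates is injective, which is exactly the paper's Lemma~\ref{lem:eff} (``fix the first $d-1$ components; at most one efficient vector maximizes the last''). Your injectivity formulation and tie-handling via the definition of $\PP$ in~(\ref{eq:pp}) is just a slightly more explicit write-up of that same lemma, so there is no gap.
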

\begin{proof}
Given $\rho\in\text{MO-CR}$, for some $\pi\in\CA_q^m$, we have $\rho=\bigwedge\nolimits_{t=1}^{q}y^t / z^{\pi(t)}$, and then $\forall k\in\OBJ, \rho_k=\min_{t=1\ldots q}\{y^t_k / z^{\pi(t)}_k\}$. Therefore, $\rho_k$ is exactly realized by the $k$th component of at least one cone summit $y^t / z^{\pi(t)}$ in the layer graph (that is a vertex in the layer-graph above). Consequently, there are at most as many possible values for the $k$th component of $\rho$, as the number of vertices in the layer graph, that is $qm$. This holds for the $d$ components of $\rho$; hence there are at most $(qm)^d$ vectors in MO-CR. More precisely, by Lemma \ref{lem:eff} (below), since MO-CR is an efficient set, then there are at most $(qm)^{d-1}$ vectors in MO-CR.
\end{proof}
\begin{lemma}\label{lem:eff}
Let $Y\subseteq\REp$ be a set of vectors, with at most $M$  values on each component:
$$|~\EFF[Y]~|\leq M^{d-1}$$
\end{lemma}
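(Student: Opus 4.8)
The plan is to prove the bound $|\EFF[Y]| \le M^{d-1}$ by projecting an efficient set onto its first $d-1$ coordinates and showing this projection is injective, then bounding the size of the projection by a counting argument on coordinate values.

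First I would set up notation: let $Y \subseteq \REp$ have at most $M$ distinct values appearing in each of its $d$ coordinates, and let $Y^\ast = \EFF[Y]$. Consider the projection map $p : \REp \to \REp[d-1]$ that forgets the last coordinate, $p(y) = (y_1,\ldots,y_{d-1})$. The key claim is that $p$ restricted to $Y^\ast$ is injective. Suppose $y, y' \in Y^\ast$ with $p(y) = p(y')$, i.e. $y_k = y'_k$ for all $k \in \{1,\ldots,d-1\}$. Then either $y_d \ge y'_d$ or $y_d \le y'_d$; in the first case $y \WPP y'$ and in the second $y' \WPP y$. If the two vectors were distinct, one would strictly Pareto-dominate the other (the strict inequality in the last coordinate giving the required $\exists k$ with $>$), contradicting that both lie in $\EFF[Y]$. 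Hence $y = y'$, establishing injectivity.

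Then I would finish by counting: injectivity gives $|Y^\ast| = |p(Y^\ast)| \le |p(Y)|$. Each vector in $p(Y)$ is a tuple $(y_1,\ldots,y_{d-1})$ where each $y_k$ ranges over at most $M$ possible values (since $Y$ has at most $M$ values on coordinate $k$, and $p(Y^\ast) \subseteq p(Y)$ uses only coordinate values already present in $Y$). Therefore $|p(Y)| \le M^{d-1}$, which yields $|\EFF[Y]| \le M^{d-1}$ as desired. The case $d = 1$ is handled separately and trivially: an efficient subset of a totally ordered finite set has exactly one element ($M^0 = 1$), consistent with the bound.

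The argument is essentially elementary, so there is no deep obstacle; the only point requiring slight care is the degenerate situation where $Y^\ast$ might be empty or where ties in the first $d-1$ coordinates force the last-coordinate comparison — but the antisymmetry of $\WPP$ combined with the definition of $\PP$ (weak dominance plus one strict inequality) handles this cleanly, since two distinct vectors agreeing on the first $d-1$ coordinates must differ, hence be strictly ordered, in the last. One should also note that the bound is stated for the number of distinct efficient \emph{vectors}; if $Y$ is given as a multiset the same bound applies to the underlying set $\EFF[Y]$.
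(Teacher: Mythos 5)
Your proof is correct and follows essentially the same route as the paper's: project the efficient set onto the first $d-1$ coordinates, observe that two distinct efficient vectors cannot agree there (else one would Pareto-dominate the other via the last coordinate), and count at most $M^{d-1}$ possible projections. Your write-up merely makes the injectivity step and the degenerate cases explicit, which the paper leaves implicit.
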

\begin{proof}  At most $M^{d-1}$ valuations are realized on the $d-1$ first components. If you fix the $d-1$ first components, there is at most one Pareto-efficient vector which maximizes the last component.
\end{proof}

In Algorithm \ref{alg:givenEFoutputMOPoA:polynomial}, there are $\Theta(q)$ steps. At each step $t$, from Theorem \ref{th:mopoa:poly}, we know that $|D^{t-1}|\leq (qm)^{d-1}$. Hence, $|\{\rho ~\wedge~ (y^t/z)~~|~~\rho\in D^{t-1},~~z\in\CF\}|\leq q^{d-1} m^d$, and the computation of the efficient set $D^t$ requires time $O((q^{d-1} m^d)^2 d)$. 
However, by using an insertion process, since there are at most $B=|D^{t}|\leq (qm)^{d-1}$ Pareto-efficient vectors at each insertion, then we only need $O(q^{d-1} m^d \times (qm)^{d-1})$ Pareto-comparisons. If $d=2$, time lowers to $O(q^{d-1} m^d\log_2(q m) d^2)=O(q m^2\log_2(q m))$.

Ultimately, Algorithm \ref{alg:givenEFoutputMOPoA:polynomial} takes $q$ steps and then time $O(q (q^{d-1} m^d)(qm)^{d-1} d)=O((qm)^{2d-1}d)$. If $d=2$, this lowers to $O((q m)^2\log_2(q m))$.

\subsection{Approximations: Proof of Lemma \ref{lem:approx} and Theorem \ref{th:approx}}

\begin{proof}
(1) First, let us  show  $R[E,F]\subseteq R[\WST[\CE],\CF]$. Let $\rho'$ be a ratio of $R[E,F]$ and let us show that:
$$
\forall y\in \WST[\CE],~~
\exists z\in \CF,~~
\text{ s.t.: } y\WPP\rho'\star z
$$
Take $y\in\WST[\CE]$. From Equation (\ref{eq:approx:E}) (first condition), there is a $y'\in E$ such that $y\WPP y'$.
From Definition \ref{def:mopoa}, there is a $z'$ such that $y'\WPP \rho'\star z'$. From Equation (\ref{eq:approx:F}) on $z'$ (first condition), there exists $z\in\CF$ such that $z'\WPP z$. Recap: $y\WPP y'\WPP \rho'\star z'\WPP \rho'\star z$.

(2) Then, let $\rho$ be a ratio of $R[\WST[\CE],\CF]$,
and let us  show that $\rho'=(1+\varepsilon_1)^{-1}(1+\varepsilon_2)^{-1}\rho$~~ is in $R[E,F]$, that is:
$$
\forall y'\in E,~~
\exists z'\in F,~~
(1+\varepsilon_1) y'\WPP (1+\varepsilon_2)^{-1} \rho\star z'
$$
Take an element $y'$ of $E$.
From Equation (\ref{eq:approx:E}) (second condition), there is $y\in\WST[\CE]$ such that $(1+\varepsilon_1)y'\WPP y$. 
From Definition \ref{def:mopoa}, there is $z\in\CF$ such that $y\WPP\rho\star z$.
From Equation (\ref{eq:approx:E}) on $z$ (second condition), there exists $z'\in F$ s.t. $z\WPP (1+\varepsilon_2)^{-1} z'$.
Recap: $(1+\varepsilon_1)y'\WPP y\WPP \rho\star z\WPP (1+\varepsilon_2)^{-1} \rho\star z'$.
\end{proof}

\begin{proof}[Theorem \ref{th:approx}]
For the first claim, since Algorithm \ref{alg:givenEFoutputMOPoA:polynomial}, given $E$ and $F$, outputs the MO-PoA corresponding to $R[E,F]$, by Theorem \ref{th:approx}, Algorithm \ref{alg:givenEFoutputMOPoA:polynomial} outputs an $((1+\varepsilon_1)(1+\varepsilon_2))$-covering of $R(\WST[\CE],\CF)$.

For the second claim, from Lemma \ref{lem:approx}, applying Algorithm \ref{alg:givenEFoutputMOPoA:polynomial} on $E$ and $F$ outputs an $((1+\varepsilon_1)(1+\varepsilon_2))$-covering of $R(\WST[\CE],\CF)$.
Moreover, since we have $|E|=O((1/\varepsilon_1)^{d-1})$ and $|F|=O((1/\varepsilon_2)^{d-1})$, Algorithm \ref{alg:givenEFoutputMOPoA:polynomial} takes time $O\left(d/(\varepsilon_1\varepsilon_2)^{(d-1)(2d-1)}\right)$.
\end{proof}

\end{document}